\newtheorem{theorem}{Theorem}[section]
\newtheorem{corollary}[theorem]{Corollary}
\newtheorem{lemma}[theorem]{Lemma}
\newtheorem{proposition}[theorem]{Proposition}
\newtheorem{defin}[theorem]{Definition}
\newenvironment{definition}{\begin{defin}\normalfont\quad}{\end{defin}}
\newtheorem{examp}[theorem]{Example}
\newtheorem{rema}[theorem]{Remark}
\newtheorem{prob}[theorem]{Problem}
\numberwithin{equation}{section}
\newcommand{\bt}{\begin{thm}}
\newcommand{\et}{\end{thm}}
\newcommand{\bp}{\begin{proof}}
\newcommand{\ep}{\end{proof}}
\newcommand{\bprop}{\begin{prop}}
\newcommand{\eprop}{\end{prop}}
\newcommand{\bl}{\begin{lemma}}
\newcommand{\el}{\end{lemma}}
\newcommand{\bc}{\begin{corollary}}
\newcommand{\ec}{\end{corollary}}
\newcommand{\Z}{\mathbb{Z}}
\newcommand{\be}{\begin{enumerate}}
\newcommand{\ee}{\end{enumerate}}
\newcommand{\OMIT}[1]{}
\title{A generalization of Sch\"{o}nemann's theorem \\ via a graph theoretic method}
\author{Khodakhast Bibak \thanks{Department of Computer Science and Software Engineering, Miami University, Oxford, Ohio, 45056, USA. Email: {\tt bibakk@miamioh.edu}} \and Bruce M. Kapron \thanks{Department of Computer Science, University of Victoria, Victoria, BC, Canada V8W 3P6. Email: {\tt bmkapron@uvic.ca}} \and Venkatesh Srinivasan \thanks{Department of Computer Science, University of Victoria, Victoria, BC, Canada V8W 3P6. Email: {\tt srinivas@uvic.ca}}}
\begin{document}

\maketitle

\begin{abstract}
Recently, Grynkiewicz et al. [{\it Israel J. Math.} {\bf 193} (2013), 359--398], using tools from additive combinatorics and group theory, proved necessary and sufficient conditions under which the linear congruence $a_1x_1+\cdots +a_kx_k\equiv b \pmod{n}$, where $a_1,\ldots,a_k,b,n$ ($n\geq 1$) are arbitrary integers, has a solution $\langle x_1,\ldots,x_k \rangle \in \Z_{n}^k$ with all $x_i$ distinct. So, it would be an interesting problem to give an explicit formula for the number of such solutions. Quite surprisingly, this problem was first considered, in a special case, by Sch\"{o}nemann almost two centuries ago(!) but his result seems to have been forgotten. Sch\"{o}nemann [{\it J. Reine Angew. Math.} {\bf 1839} (1839), 231--243] proved an explicit formula for the number of such solutions when $b=0$, $n=p$ a prime, and $\sum_{i=1}^k a_i \equiv 0 \pmod{p}$ but $\sum_{i \in I} a_i \not\equiv 0 \pmod{p}$ for all $\emptyset \not= I\varsubsetneq \lbrace 1, \ldots, k\rbrace$. In this paper, we generalize Sch\"{o}nemann's theorem using a result on the number of solutions of linear congruences due to D. N. Lehmer and also a result on graph enumeration. This seems to be a rather uncommon method in the area; besides, our proof technique or its modifications may be useful for dealing with other cases of this problem (or even the general case) or other relevant problems.
\end{abstract}

{\bf Keywords:} Linear congruence; distinct coordinates; graph enumeration
\vskip .3cm
{\bf 2010 Mathematics Subject Classification:} 11D79, 11P83, 05C30

\section{Introduction}\label{Sec 1}

Throughout the paper, we use $(a_1,\ldots,a_k)$ to denote the greatest common divisor (gcd) of the integers $a_1,\ldots,a_k$, and write $\langle a_1,\ldots,a_k\rangle$ for an ordered $k$-tuple of integers. Let $a_1,\ldots,a_k,b,n\in \Z$, $n\geq 1$. A linear congruence in $k$ unknowns $x_1,\ldots,x_k$ is of the form
\begin{align} \label{cong form}
a_1x_1+\cdots +a_kx_k\equiv b \pmod{n}.
\end{align}

By a solution of (\ref{cong form}), we mean an $\mathbf{x}=\langle x_1,\ldots,x_k \rangle \in \mathbb{Z}_n^k$ that satisfies (\ref{cong form}). The following result, proved by D. N. Lehmer \cite{LEH2}, gives the number of solutions of the above linear congruence:

\begin{proposition}\label{Prop: lin cong}
Let $a_1,\ldots,a_k,b,n\in \Z$, $n\geq 1$. The linear congruence $a_1x_1+\cdots +a_kx_k\equiv b \pmod{n}$ has a solution $\langle x_1,\ldots,x_k \rangle \in \Z_{n}^k$ if and only if $\ell \mid b$, where
$\ell=(a_1, \ldots, a_k, n)$. Furthermore, if this condition is satisfied, then there are $\ell n^{k-1}$ solutions.
\end{proposition}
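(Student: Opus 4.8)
The plan is to prove both assertions simultaneously by induction on the number of variables $k$, reducing the $k$-variable congruence to a $(k-1)$-variable one by fixing the last coordinate and tracking how the gcd parameter changes. Before the induction, one clean observation settles the ``only if'' direction in general: since $\ell \mid n$ and $\ell \mid a_i$ for every $i$, any solution satisfies $\ell \mid (a_1x_1+\cdots+a_kx_k) - b + (\text{multiple of }n)$, hence $\ell \mid b$. So the content is the count.

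For the base case $k=1$, the congruence $a_1x_1\equiv b\pmod n$ is the classical one‑variable linear congruence. Writing $d=(a_1,n)$, it is solvable precisely when $d\mid b$, and then dividing through by $d$ gives $(a_1/d)x_1\equiv b/d\pmod{n/d}$ with $(a_1/d,\,n/d)=1$; here $a_1/d$ is invertible modulo $n/d$, so there is a unique solution modulo $n/d$, which lifts to exactly $d$ solutions modulo $n$. This matches the claim, since $\ell=d$ and $\ell n^{k-1}=d$.

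For the inductive step, assume the statement for $k-1$ variables and put $\ell'=(a_1,\ldots,a_{k-1},n)$. For each fixed $t\in\Z_n$ assigned to $x_k$, the remaining congruence $a_1x_1+\cdots+a_{k-1}x_{k-1}\equiv b-a_kt\pmod n$ has, by the induction hypothesis, exactly $\ell' n^{k-2}$ solutions when $\ell'\mid b-a_kt$ and none otherwise. So it remains to count the $t\in\Z_n$ with $a_k t\equiv b\pmod{\ell'}$. Since $\ell'\mid n$, such $t$ in $\Z_n$ are exactly the lifts of the solutions in $\Z_{\ell'}$, so by the base case their number is $(n/\ell')\cdot(a_k,\ell')$ when $(a_k,\ell')\mid b$ and $0$ otherwise. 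Finally the associativity of gcd gives $(a_k,\ell')=(a_k,(a_1,\ldots,a_{k-1},n))=(a_1,\ldots,a_k,n)=\ell$, so solvability amounts to $\ell\mid b$, and in that case the total is $(n/\ell')\cdot \ell\cdot \ell' n^{k-2}=\ell n^{k-1}$, completing the induction.

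The only real hazard is bookkeeping: one must carefully use $\ell'\mid n$ to pass between residue counts modulo $\ell'$ and modulo $n$, and invoke the ``gcd of a gcd'' identity for $\ell$; there is no deeper obstacle, as this is a classical fact. An alternative, non‑inductive route writes the number of solutions as the character sum $\tfrac1n\sum_{j=0}^{n-1}e^{-2\pi i jb/n}\prod_{m=1}^{k}\sum_{x=0}^{n-1}e^{2\pi i ja_mx/n}$, notes that the inner sum over $x$ equals $n$ when $\tfrac{n}{(n,a_m)}\mid j$ and $0$ otherwise, checks by prime‑by‑prime valuations that $\operatorname{lcm}_m\tfrac{n}{(n,a_m)}=n/\ell$, and then evaluates the resulting geometric sum over $j$ to get $\ell n^{k-1}$ when $\ell\mid b$ and $0$ otherwise. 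I would present the induction, since it needs nothing beyond the $k=1$ case.
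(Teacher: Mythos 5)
Your proof is correct. Note, however, that the paper does not prove this proposition at all: it is quoted as a known result of D.~N.~Lehmer with a citation to \cite{LEH2}, so there is no in-paper argument to compare against. Your induction on $k$ is sound: the ``only if'' direction follows from $\ell \mid a_i$ and $\ell \mid n$; the base case is the classical one-variable congruence; and in the inductive step the key points all check out --- the count of $t \in \Z_n$ with $a_k t \equiv b \pmod{\ell'}$ is indeed $(n/\ell')(a_k,\ell')$ when $(a_k,\ell') \mid b$ (using $\ell' \mid n$ to lift from $\Z_{\ell'}$ to $\Z_n$), the gcd identity $(a_k,(a_1,\ldots,a_{k-1},n)) = \ell$ is exactly the associativity you invoke, and the product $(n/\ell')\cdot\ell\cdot\ell' n^{k-2} = \ell n^{k-1}$ is as claimed. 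The exponential-sum alternative you sketch is also a standard and valid route. Either argument would serve as a complete, self-contained proof of the proposition.
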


Counting the number of solutions of the above congruence with some restrictions on the solutions is also a problem of great interest. As an important example, one can mention the restrictions $(x_i,n)=t_i$ ($1\leq i\leq k$), where $t_1,\ldots,t_k$ are given positive divisors of $n$. The number of solutions of the linear congruences with the above restrictions, which were called {\it restricted linear congruences} in \cite{BKSTT}, was first considered by Rademacher \cite{Rad1925} in 1925 and Brauer \cite{Bra1926} in 1926, in the special case of $a_i=t_i=1$ $(1\leq i \leq k)$, and they proved the following nice formula for the number $N_n(k,b)$ of such solutions:
\begin{align*}
N_n(k,b)= \frac{\varphi(n)^k}{n} \mathlarger{\prod}_{p\, \mid \, n, \,
p\, \mid\, b} \!\left(1-\frac{(-1)^{k-1}}{(p-1)^{k-1}}
\right)\mathlarger{\prod}_{p\, \mid\, n, \, p \, \nmid \, b}
\!\left(1-\frac{(-1)^k}{(p-1)^k}\right),
\end{align*}
where $\varphi(n)$ is Euler's totient function and the products are taken over all prime divisors $p$ of $n$. Since then, this problem has been studied, in several other special cases, in many papers (very recently, it was studied in its `most general case' in \cite{BKSTT}) and has found very interesting applications in number theory, combinatorics, geometry, computer science, cryptography etc; see \cite{BKS2, BKSTT3, BKSTT, BKSTT2, COH0, JAWILL} for a detailed discussion about this problem and a comprehensive list of references. Another restriction of potential interest is imposing the condition that all $x_i$ are {\it distinct}. Unlike the first problem, there seems to be very little published on the second problem. Recently, Grynkiewicz et al. \cite{GPP}, using tools from additive combinatorics and group theory, proved necessary and sufficient conditions under which the linear congruence $a_1x_1+\cdots +a_kx_k\equiv b \pmod{n}$, where $a_1,\ldots,a_k,b,n$ ($n\geq 1$) are arbitrary integers, has a solution $\langle x_1,\ldots,x_k \rangle \in \Z_{n}^k$ with all $x_i$ distinct; see also \cite{ADP, GPP} for connections to zero-sum theory and \cite{BKS7} for connections to coding theory. So, it would be an interesting problem to give an explicit formula for the number of such solutions. Quite surprisingly, this problem was first considered, in a special case, by Sch\"{o}nemann \cite{SCH} almost two centuries ago(!) but his result seems to have been forgotten. Sch\"{o}nemann \cite{SCH} proved the following result:

\begin{theorem} \label{Schonemann thm}
Let $p$ be a prime, $a_1,\ldots,a_k$ be arbitrary integers, and $\sum_{i=1}^k a_i \equiv 0 \pmod{p}$ but $\sum_{i \in I} a_i \not\equiv 0 \pmod{p}$ for all $\emptyset \not= I\varsubsetneq \lbrace 1, \ldots, k\rbrace$. The number $N_p(k)$ of solutions $\langle x_1,\ldots,x_k \rangle \in \Z_{p}^k$ of the linear congruence $a_1x_1+\cdots +a_kx_k\equiv 0 \pmod{p}$, with all $x_i$ distinct, is independent of the coefficients $a_1,\ldots,a_k$ and is equal to

$$
N_p(k)=(-1)^{k-1}(k-1)!(p-1)+(p-1)\cdots(p-k+1).
$$
\end{theorem}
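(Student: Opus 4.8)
The plan is to run an inclusion--exclusion over the graphs on the vertex set $[k]:=\{1,\dots,k\}$ that record which coordinates of a solution coincide, then to evaluate the resulting counts with Proposition~\ref{Prop: lin cong}, and finally to recognize the two surviving sums as (a specialization of) the chromatic polynomial of $K_k$ and as a signed count of connected labeled graphs. For a graph $G$ on $[k]$ (identified with its edge set, a subset of $\binom{[k]}{2}$), write $e(G)$ for its number of edges and $c(G)$ for its number of connected components, and let $A(G)$ be the number of $\mathbf x\in\Z_p^k$ with $a_1x_1+\cdots+a_kx_k\equiv 0\pmod p$ and $x_i=x_j$ for every edge $\{i,j\}$ of $G$. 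If $\mathbf x$ is any solution of the congruence, the pairs $\{i,j\}$ with $x_i=x_j$ form a disjoint union of cliques $G_{\mathbf x}$, which is edgeless exactly when $\mathbf x$ has pairwise distinct coordinates. Since $\sum_{G\subseteq K_k}(-1)^{e(G)}\mathbf 1[\,G\subseteq G_{\mathbf x}\,]=(1-1)^{e(G_{\mathbf x})}$, which equals $1$ if $\mathbf x$ has distinct coordinates and $0$ otherwise, summing over all solutions gives $N_p(k)=\sum_{G\subseteq K_k}(-1)^{e(G)}A(G)$.

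The next step is to compute $A(G)$. Requiring $x_i=x_j$ along every edge of $G$ forces $\mathbf x$ to be constant on each connected component $B$ of $G$; writing $y_B$ for the common value there, the congruence becomes $\sum_{B}\big(\sum_{i\in B}a_i\big)y_B\equiv 0\pmod p$ in $c(G)$ unknowns, which by Proposition~\ref{Prop: lin cong} has $\ell_G\,p^{\,c(G)-1}$ solutions, where $\ell_G$ is the gcd of $p$ together with the coefficients $\sum_{i\in B}a_i$. As $p$ is prime, $\ell_G=p$ when $p\mid\sum_{i\in B}a_i$ for every component $B$, and $\ell_G=1$ otherwise. This is precisely where the hypothesis enters: since $\sum_{i\in I}a_i\not\equiv 0\pmod p$ for every $\emptyset\neq I\subsetneq[k]$, all the component-sums can be $\equiv 0\pmod p$ only when $G$ has a single component, equal to all of $[k]$ --- that is, only when $G$ is connected, in which case $A(G)=p$. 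For every disconnected $G$ we obtain $A(G)=p^{\,c(G)-1}$.

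Since $p^{\,c(G)-1}=1$ when $G$ is connected, we may record this uniformly as $A(G)=p^{\,c(G)-1}+(p-1)\,\mathbf 1[\,G\text{ connected}\,]$, whence
\[
N_p(k)=\sum_{G\subseteq K_k}(-1)^{e(G)}p^{\,c(G)-1}\;+\;(p-1)\!\!\sum_{\substack{G\subseteq K_k\\ G\ \text{connected}}}\!\!(-1)^{e(G)}.
\]
For the first sum I would invoke Whitney's expansion of the chromatic polynomial, which for $K_k$ reads $\sum_{G\subseteq K_k}(-1)^{e(G)}x^{c(G)}=x(x-1)\cdots(x-k+1)$; setting $x=p$ and dividing by $p$ converts it into $(p-1)(p-2)\cdots(p-k+1)$. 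For the second sum I would use the classical signed enumeration of connected labeled graphs, $\sum_{G\subseteq K_k,\ G\ \text{connected}}(-1)^{e(G)}=(-1)^{k-1}(k-1)!$; this drops out of the exponential formula, since $\sum_{G\subseteq K_k}(-1)^{e(G)}=(1-1)^{\binom{k}{2}}$ vanishes for $k\ge 2$, forcing the exponential generating function of the connected signed counts to be $\log(1+t)$, and it can equally be read off from the multiplicativity of the M\"obius function of the partition lattice. Adding the two contributions gives exactly $N_p(k)=(-1)^{k-1}(k-1)!(p-1)+(p-1)\cdots(p-k+1)$.

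The heart of the argument --- and the step I expect to need the most care --- is the structural computation of $A(G)$: one has to see that Lehmer's gcd factor collapses to $1$ for every disconnected $G$, and this collapse is exactly the arithmetic content of the hypothesis that no proper, nonempty sub-sum of the $a_i$ is divisible by $p$. Once this is in place, what remains is the standard (but not purely mechanical) identification of the two graph-enumeration sums above; one should also note in passing that the hypothesis forces $k\le p$, so there is no clash with the evident vanishing of $N_p(k)$ when $k>p$.
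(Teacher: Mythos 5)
Your proof is correct and follows essentially the same route as the paper's proof of its general Theorem~\ref{Gener Schonemann thm} (of which Theorem~\ref{Schonemann thm} is the special case $b=0$, $n=p$): inclusion--exclusion over coincidence graphs, Lehmer's count via Proposition~\ref{Prop: lin cong} with the subset-sum hypothesis collapsing the gcd factor to $1$ except for connected graphs, and then the two signed graph-enumeration identities. The only cosmetic difference is that you evaluate the first sum by Whitney's expansion of the chromatic polynomial of $K_k$ and the second by the exponential formula, whereas the paper extracts both as coefficients of $(1+z)^n$ and $\log(1+z)$ obtained by specializing the deformed exponential function at $y=-1$; these are the same identities.
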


In this paper, we generalize Sch\"{o}nemann's theorem using Proposition~\ref{Prop: lin cong} and a result on graph enumeration. This seems to be a rather uncommon method in the area; besides, our proof technique or its modifications may be useful for dealing with other cases of this problem (or even the general case) or other relevant problems. We state and prove our main result in the next section.

\section{Main Result}\label{Sec 2}

Our generalization of Sch\"{o}nemann's theorem is obtained via a graph theoretic method which may be also of independent interest. We need two formulas on graph enumeration (see Theorem~\ref{graph enum} below). These formulas are in terms of the {\it deformed exponential function} which is a special case of the {\it three variable Rogers-Ramanujan function} defined below. These functions have interesting applications in combinatorics, complex analysis, functional differential equations, and statistical mechanics (see \cite{ACH, KOSH, LANG, LIU, SCSO, SOKA} and the references therein).

\begin{definition}\label{Rog-Ram func}
The {\it three variable Rogers-Ramanujan function} is

$$
R(\alpha,\beta,q)=\sum_{m\geq0}\frac{\alpha^{m}\beta^{(_{2}^{m})}}{(1+q)(1+q+q^2)\cdots(1+q+\cdots+q^{m-1})}.
$$
Also, the {\it deformed exponential function} is

$$
F(\alpha,\beta)=R(\alpha,\beta,1)=\sum_{m\geq0}\frac{\alpha^{m}\beta^{(_{2}^{m})}}{m!}.
$$
\end{definition}

Let $g(c,e,k)$ be the number of simple graphs with $c$ connected components, $e$ edges, and $k$ vertices labeled $1,\ldots,k$, and $g'(e,k)$ be the number of simple {\it connected} graphs with $e$ edges and $k$ labeled vertices. Suppose that

$$
G(t,y,z)=\sum_{c,e,k}g(c,e,k)t^{c}y^{e}\frac{z^k}{k!},
$$
and

$$
CG(y,z)=\sum_{e,k}g'(e,k)y^{e}\frac{z^k}{k!}.
$$

\begin{theorem} {\rm (\cite{ACH, STAN2})} \label{graph enum}
The generating functions for counting simple graphs and simple connected graphs satisfy, respectively,

$$
G(t,y,z)=F(z,1+y)^t,
$$
and

$$
CG(y,z)=\log F(z,1+y),
$$
where $F$ is the deformed exponential function defined above.
\end{theorem}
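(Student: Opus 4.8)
The plan is to derive both identities from the classical exponential formula for labeled combinatorial structures, after first recognizing $F(z,1+y)$ as the bivariate exponential generating function (EGF) of \emph{all} labeled simple graphs, with the variable $y$ marking edges.

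First I would record the elementary observation that a simple graph on the vertex set $\{1,\dots,m\}$ is nothing but a choice of a subset of the $\binom{m}{2}$ potential edges, so that
$$
\sum_{H \text{ on } [m]} y^{e(H)} = (1+y)^{\binom{m}{2}},
$$
where $e(H)$ denotes the number of edges of $H$. Summing against $z^m/m!$ and comparing with Definition~\ref{Rog-Ram func}, this gives
$$
\sum_{m\ge 0}(1+y)^{\binom{m}{2}}\frac{z^m}{m!} = F(z,1+y),
$$
that is, $F(z,1+y)$ is precisely $\sum_{e,k}\big(\sum_c g(c,e,k)\big)\,y^e\,\frac{z^k}{k!}=G(1,y,z)$.

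Next I would invoke the exponential formula. Every labeled simple graph decomposes uniquely into its connected components, which are connected simple graphs supported on the blocks of a set partition of the vertex set; moreover both the number of edges and the number of components are additive over this decomposition. Hence, marking components by $t$ and tracking edges by $y$, the (component-marked) exponential formula yields
$$
G(t,y,z) = \exp\!\big(t\,CG(y,z)\big),
$$
as an identity of formal power series in $z$; this makes sense because $CG(y,z)$ has zero constant term in $z$ (there is no connected graph on $0$ vertices). Setting $t=1$ and using the previous paragraph gives $\exp(CG(y,z))=G(1,y,z)=F(z,1+y)$, and since $F(z,1+y)$ has constant term $1$ in $z$ we may take logarithms to get $CG(y,z)=\log F(z,1+y)$. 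Substituting this back into $G(t,y,z)=\exp(t\,CG(y,z))$ yields $G(t,y,z)=F(z,1+y)^t$, proving both assertions.

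The one step I would treat with care — and the only place where anything can go wrong — is the bookkeeping that licenses the exponential formula in the presence of the extra edge-variable: one must verify that the weight $y^{e}$ is multiplicative under disjoint union of graphs (which holds exactly because edge counts add over components), and that all manipulations are legitimate as formal power series in $z$ with coefficients in $\mathbb{Q}[t,y]$, so that the logarithm, the exponential, and the substitution $\beta\mapsto 1+y$ in the series defining $F$ are all well defined. Once this formal framework is fixed, the argument is immediate; it is essentially the derivation given in \cite{ACH, STAN2}.
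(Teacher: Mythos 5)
Your argument is correct, but note that the paper itself gives no proof of this theorem: it is quoted directly from \cite{ACH, STAN2}. Your derivation---first identifying $F(z,1+y)=\sum_{m\ge 0}(1+y)^{\binom{m}{2}}z^m/m!$ as the edge-weighted exponential generating function of all labeled simple graphs, then applying the exponential formula with a component-marking variable $t$ and taking $\log$ at $t=1$---is precisely the standard proof found in those references (e.g.\ the exponential formula in Stanley, Vol.~2), and your care about the multiplicativity of the weight $y^{e}$ over components and the formal-power-series setting is exactly the right bookkeeping.
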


Now, we are ready to state and prove our main result:

\begin{theorem} \label{Gener Schonemann thm}
Let $a_1,\ldots,a_k,b,n$ $(n\geq 1)$ be arbitrary integers, and $(\sum_{i \in I} a_i, n)=1$ for all $\emptyset \not= I\varsubsetneq \lbrace 1, \ldots, k\rbrace$. The number $N_n(b;a_1,\ldots,a_k)$ of solutions $\langle x_1,\ldots,x_k \rangle \in \Z_{n}^k$ of the linear congruence $a_1x_1+\cdots +a_kx_k\equiv b \pmod{n}$, with all $x_i$ distinct, is
\begin{align*}
& N_n(b;a_1,\ldots,a_k)\\
&=\begin{dcases}
(-1)^{k}(k-1)!+(n-1)\cdots(n-k+1), & \text{if \ $(\sum_{i=1}^{k} a_i, n) \nmid b$}; \\
(-1)^{k-1}(k-1)!\left((\sum_{i=1}^{k} a_i, n)-1\right)+(n-1)\cdots(n-k+1), & \text{if \ $(\sum_{i=1}^{k} a_i, n) \mid b$}.
\end{dcases}
\end{align*}
\end{theorem}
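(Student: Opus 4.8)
The plan is to use inclusion–exclusion over the "collision pattern" of the tuple $\langle x_1,\dots,x_k\rangle$, exactly as one does when passing from arbitrary solutions to solutions with distinct coordinates, and then to recognize the resulting combinatorial sum as a graph enumeration generating function to which Theorem~\ref{graph enum} applies. First I would set up the sieve: for a partition $\pi$ of $\{1,\dots,k\}$, let $S_\pi$ be the set of solutions $\langle x_1,\dots,x_k\rangle\in\Z_n^k$ that are constant on each block of $\pi$ (coordinates in the same block forced equal, no condition across blocks). Then the number of solutions with all $x_i$ distinct is $\sum_{\pi}\mu(\hat 0,\pi)\,|S_\pi|$ where $\mu$ is the Möbius function of the partition lattice, equivalently $N_n(b;a_1,\dots,a_k)=\sum_{\pi}(-1)^{k-|\pi|}\prod_{B\in\pi}(|B|-1)!\;|S_\pi|$. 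Identifying a tuple constant on the blocks of $\pi$ with a tuple indexed by the blocks, $|S_\pi|$ is the number of solutions in $\Z_n^{|\pi|}$ of the linear congruence with coefficients $\bigl(\sum_{i\in B}a_i\bigr)_{B\in\pi}$ and right-hand side $b$.

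The key simplification comes from the hypothesis $(\sum_{i\in I}a_i,n)=1$ for every nonempty proper subset $I$. This means that for any partition $\pi$ with at least two blocks, at least one block-coefficient $\sum_{i\in B}a_i$ is a unit mod $n$, so by Proposition~\ref{Prop: lin cong} the congruence defining $S_\pi$ has gcd $(\dots,n)=1\mid b$ and hence exactly $1\cdot n^{|\pi|-1}$ solutions. The only partition that can behave differently is the one-block partition $\pi=\hat 1=\{\{1,\dots,k\}\}$, whose single coefficient is $\sum_{i=1}^k a_i$; there $|S_{\hat 1}|=\ell$ if $\ell\mid b$ and $0$ otherwise, where $\ell=(\sum_{i=1}^k a_i,n)$. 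Splitting the sieve sum accordingly, the contribution of the partition $\hat 1$ is $(-1)^{k-1}(k-1)!\cdot|S_{\hat 1}|$, and the contribution of all partitions with $\ge 2$ blocks is
\[
\sum_{\substack{\pi:\ |\pi|=m\\ 2\le m\le k}}(-1)^{k-m}\Bigl(\prod_{B\in\pi}(|B|-1)!\Bigr)n^{m-1}.
\]
So I would next write this last quantity as $\sum_{m=2}^{k}(-1)^{k-m}n^{m-1}\,d(k,m)$ where $d(k,m)=\sum_{|\pi|=m}\prod_{B}(|B|-1)!$ is the (signless) Stirling number of the first kind $c(k,m)$ — i.e. the number of permutations of $k$ elements with $m$ cycles — because choosing a set partition into blocks and a cyclic order on each block is the same as choosing a permutation. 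Thus the whole "many-block" part equals $\frac{1}{n}\sum_{m\ge 1}(-1)^{k-m}c(k,m)n^{m}$ minus the $m=1$ term, and $\sum_{m}c(k,m)x^m=x(x+1)\cdots(x+k-1)$, the rising factorial; evaluating at $x=-n$ (to absorb the sign $(-1)^{k-m}$, note $(-1)^{k-m}c(k,m)n^m=c(k,m)(-1)^k(-n)^{-m}\cdot n^m\cdots$ — more cleanly, $\sum_m c(k,m)(-n)^m=(-n)(-n+1)\cdots(-n+k-1)=(-1)^k n(n-1)\cdots(n-k+1)$) gives, after dividing by $n$ and multiplying by the global $(-1)^k$, exactly $(n-1)(n-2)\cdots(n-k+1)$ for the many-block part, with the stray $m=1$ term being $(-1)^{k-1}(k-1)!$. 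Combining with the $\hat 1$-contribution $(-1)^{k-1}(k-1)!|S_{\hat 1}|$ then yields the two cases of the theorem: when $\ell\nmid b$ we get $(n-1)\cdots(n-k+1)-(-1)^{k-1}(k-1)!=(-1)^k(k-1)!+(n-1)\cdots(n-k+1)$, and when $\ell\mid b$ we get $(n-1)\cdots(n-k+1)+(-1)^{k-1}(k-1)!(\ell-1)$.

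Where does Theorem~\ref{graph enum} enter, and what is the main obstacle? Rather than juggling Stirling-number identities by hand, the cleaner route — and the one matching the paper's stated method — is to interpret $\sum_{\pi}(-1)^{k-|\pi|}\prod_B(|B|-1)!\,n^{|\pi|-1}$ through the generating function $G(t,y,z)=F(z,1+y)^t$: the factor $n^{|\pi|-1}$ is a product of "independent choices" over blocks once we think of the components of a graph, $\prod_B(|B|-1)!$ counts cyclic structures, and the signed sum over partitions is precisely the log/exp inversion encoded in $CG=\log F$. Concretely I expect to show that the exponential generating function $\sum_k N_n(b;a_1,\dots,a_k)\frac{z^k}{k!}$ (over $k$, for fixed behavior of $b$ mod $\ell$) factors as a correction term times $(1+(\text{something}))^{\pm}$ read off from Theorem~\ref{graph enum}, then extract coefficients. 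The main obstacle is bookkeeping the exceptional block $\{1,\dots,k\}$: it is the unique block whose coefficient-gcd need not be $1$, so it must be pulled out of the otherwise uniform product before the graph-enumeration generating-function identity can be applied cleanly; handling this "defect at the top of the lattice" correctly is what produces the split into the two cases and the exact value $|S_{\hat 1}|\in\{0,\ell\}$. Once that is isolated, the remaining sum is genuinely the uniform one covered by Theorem~\ref{graph enum}, and the rising-factorial evaluation above finishes the computation; as a consistency check, setting $n=p$ prime and $b=0$ (so $\ell=p$, $p\mid b$) recovers Schönemann's $N_p(k)=(-1)^{k-1}(k-1)!(p-1)+(p-1)\cdots(p-k+1)$.
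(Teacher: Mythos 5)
Your argument is correct, and it reaches the stated formula by a genuinely different route from the paper. The paper sieves over subsets $S$ of the $\binom{k}{2}$ potential equalities $x_u=x_v$, i.e.\ over edge sets of graphs on $k$ labeled vertices; the count $N(S)$ depends only on the number of connected components of $G(S)$, so the alternating sum is organized by the statistics $g(c,e,k)$ and $g'(e,k)$ and evaluated by specializing the generating-function identities $G(t,y,z)=F(z,1+y)^t$ and $CG(y,z)=\log F(z,1+y)$ of Theorem~\ref{graph enum} at $y=-1$, where $F(z,0)=1+z$. You instead perform M\"obius inversion directly on the partition lattice, which compresses the entire sum over edge sets into the known M\"obius function $\mu(\hat 0,\pi)=(-1)^{k-|\pi|}\prod_{B\in\pi}(|B|-1)!$, and then the sum over partitions with $m\geq 2$ blocks becomes the Stirling-number identity $\sum_m c(k,m)x^m=x(x+1)\cdots(x+k-1)$ evaluated at $x=-n$. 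The two computations are equivalent because $\sum_{S}(-1)^{|S|}$, taken over edge sets making a given block connected, equals $\mu(\hat 0,\hat 1)$ on that block --- a fact the paper in effect re-derives through the $\log F$ identity; your version is more elementary and self-contained (no deformed exponential function needed), at the cost of invoking the partition-lattice M\"obius function as known, while the paper's version showcases the graph-enumeration machinery it advertises. Both treat the exceptional one-block/connected case identically, via Proposition~\ref{Prop: lin cong}, and your case analysis and final arithmetic check out. Two cosmetic remarks: the intermediate sign manipulation ``$(-1)^{k-m}c(k,m)n^m=\cdots$'' is garbled before you restate it correctly as $\sum_m c(k,m)(-n)^m=(-1)^k\,n(n-1)\cdots(n-k+1)$, and the closing paragraph speculating about how Theorem~\ref{graph enum} might be threaded in is unnecessary, since the Stirling-number computation already completes the proof.
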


\begin{proof}
Let $\langle x_1,\ldots,x_k \rangle \in \Z_{n}^k$ be a solution of the linear congruence $a_1x_1+\cdots +a_kx_k\equiv b \pmod{n}$. Note that our desired solutions are those for which none of the $\binom{k}{2}$ equalities $x_u = x_v$, $1\leq u<v \leq k$, holds. Let $T_k=\lbrace \lbrace u,v \rbrace : 1\leq u<v \leq k \rbrace$. By the inclusion-exclusion principle, the number of such solutions is  
\begin{align} \label{iep}
N_n(b;a_1,\ldots,a_k)&=\sum_{e=0}^{(_{2}^{k})}(-1)^{e}\sum_{\substack{S \subseteq T_k \\ |S|=e}}N(S),
\end{align}
where $N(S)$ is the number of solutions of the linear congruence with $x_{\alpha} = x_{\beta}$ for $\lbrace \alpha,\beta \rbrace \in S$. 

Now, we need to calculate 
$$\sum_{\substack{S \subseteq T_k \\ |S|=e}}N(S).$$ 
In order to calculate $N(S)$, we construct the graph $G(S)$ on vertices $1,\ldots,k$ and edge set $S$. In calculating $N(S)$ we note that all vertices $i$ in a connected component of $G(S)$ correspond to the same $x_i$ in the linear congruence (by the definition of $N(S)$), and so we can simplify the linear congruence by grouping the $x_i$ which are equal to each other. This procedure eventually gives a new linear congruence in which the coefficients are of the form $\sum_{i \in I} a_i$, where $\emptyset \not= I\subseteq \lbrace 1, \ldots, k\rbrace$, and the number of terms is equal to the number of connected components of $G(S)$. If $G(S)$ has $c>1$ connected components then since $(\sum_{i \in I} a_i, n)=1$ for all $\emptyset \not= I\varsubsetneq \lbrace 1, \ldots, k\rbrace$, by Proposition~\ref{Prop: lin cong} we have $N(S)=n^{c-1}$. Also, if $G(S)$ is connected, that is, $c=1$ then $N(S)$ is the number of solutions of the linear congruence $(\sum_{i=1}^{k}a_i)x\equiv b \pmod{n}$, and so by Proposition~\ref{Prop: lin cong}, $N(S)$ in this case, we denote it by $A$, is equal to $(\sum_{i=1}^{k} a_i, n)$ if $(\sum_{i=1}^{k} a_i, n) \mid b$, and is equal to zero otherwise. Let $g(c,e,k)$ be the number of simple graphs with $c$ connected components, $e$ edges, and $k$ vertices labeled $1,\ldots,k$, and $g'(e,k)$ be the number of simple {\it connected} graphs with $e$ edges and $k$ labeled vertices. Now, recalling (\ref{iep}), we get
\begin{align*}
N_n(b;a_1,\ldots,a_k)&=\sum_{e=0}^{(_{2}^{k})}(-1)^{e}\left(Ag'(e,k)+\sum_{c=2}^{k}n^{c-1}g(c,e,k)\right)
\\
&=A\sum_{e=0}^{(_{2}^{k})}(-1)^{e}g'(e,k)+\frac{1}{n}\sum_{e=0}^{(_{2}^{k})}\sum_{c=2}^{k}(-1)^{e}n^{c}g(c,e,k)\\
&=(A-1)\sum_{e=0}^{(_{2}^{k})}(-1)^{e}g'(e,k)+\frac{1}{n}\sum_{e=0}^{(_{2}^{k})}\sum_{c=1}^{k}(-1)^{e}n^{c}g(c,e,k).
\end{align*}
Now, in order to evaluate the latter expression, we use the two formulas mentioned in Theorem~\ref{graph enum}. In fact, by Theorem~\ref{graph enum}, we have 

$$
\sum_{e,k}(-1)^{e}g'(e,k)\frac{z^k}{k!}=\log F(z,0),
$$
and

$$
\sum_{c,e,k}(-1)^{e}n^{c}g(c,e,k)\frac{z^k}{k!}=F(z,0)^n,
$$
where $F$ is the deformed exponential function. Note that $F(z,0)=1+z$. Now, we have 

$$
\sum_{e=0}^{(_{2}^{k})}(-1)^{e}g'(e,k)=\text{the coefficient of $\frac{z^k}{k!}$ in $\log(1+z)$, which is equal to $\frac{k!(-1)^{k+1}}{k}$},
$$
and

$$
\sum_{e=0}^{(_{2}^{k})}\sum_{c=1}^{k}(-1)^{e}n^{c}g(c,e,k)=\text{the coefficient of $\frac{z^k}{k!}$ in $(1+z)^{n}$, which is equal to $k!(_{k}^{n})$}.
$$

Consequently, the number $N_n(b;a_1,\ldots,a_k)$ of solutions $\langle x_1,\ldots,x_k \rangle \in \Z_{n}^k$ of the linear congruence $a_1x_1+\cdots +a_kx_k\equiv b \pmod{n}$, with all $x_i$ distinct, is 
\begin{align*}
& N_n(b;a_1,\ldots,a_k)=\frac{(A-1)k!(-1)^{k+1}}{k}+\frac{k!(_{k}^{n})}{n}\\
&=\begin{dcases}
(-1)^{k}(k-1)!+(n-1)\cdots(n-k+1), & \text{if \ $(\sum_{i=1}^{k} a_i, n) \nmid b$}; \\
(-1)^{k-1}(k-1)!\left((\sum_{i=1}^{k} a_i, n)-1\right)+(n-1)\cdots(n-k+1), & \text{if \ $(\sum_{i=1}^{k} a_i, n) \mid b$}.
\end{dcases}
\end{align*}
\end{proof}

\begin{rema}
Note that in Sch\"{o}nemann's theorem, $b$ is zero and $n$ is prime but in Theorem~\ref{Gener Schonemann thm}, both $b$ and $n$ are arbitrary.
\end{rema}

It would be an interesting problem to see if the technique presented in this paper can be modified so that it covers the problem in its full generality. So, we pose the following question.

\bigskip

\noindent{\textbf{Problem 1.}} Let $a_1,\ldots,a_k,b,n$ ($n\geq 1$) be arbitrary integers. Give an explicit formula for the number of solutions $\langle x_1,\ldots,x_k \rangle \in \Z_{n}^k$ of the linear congruence $a_1x_1+\cdots +a_kx_k\equiv b \pmod{n}$ with all $x_i$ distinct.

Such results would be interesting from several aspects. As we mentioned in the Introduction, the number of solutions of the linear congruence with the restrictions $(x_i,n)=t_i$ ($1\leq i\leq k$), where $t_1,\ldots,t_k$ are given positive divisors of $n$, has found very interesting applications in number theory, combinatorics, geometry, computer science, cryptography etc. Therefore, having an explicit formula for the number of solutions with all $x_i$ distinct may also lead to interesting applications in these or other directions. The problem may also have implications in zero-sum theory (see \cite{ADP, GPP}) and 
in coding theory (see \cite{BKS7}).

\section*{Acknowledgements}

The authors are grateful to the anonymous referees for a careful reading of the paper and helpful comments.

\end{document}